\newtheorem{prop}{Proposition}[section]
\newtheorem{thm}[prop]{Theorem}
\newtheorem{cor}[prop]{Corollary}
\newtheorem{Defi}[prop]{Definition}
\newtheorem{Rem}[prop]{Remark}
\newtheorem{Exam}[prop]{Example}
\newtheorem{Exer}[prop]{Exercise}
\newtheorem{Expe}[prop]{Experiment}
\newtheorem{Cons}[prop]{Construction}
\newtheorem{Alg}[prop]{Algorithm}
\newtheorem{Prob}[prop]{Research Problem}
\newenvironment{defi}{\begin{Defi} \rm}{\end{Defi}}
\newenvironment{rem}{\begin{Rem} \rm}{\end{Rem}}
\newenvironment{ex}{\begin{Exam} \rm}{\end{Exam}}
\newenvironment{proof}{\noindent{\bf Proof.}\ }{\hspace*{\fill}$\diamond$\medskip\par}
\newcommand{\fq}{{\mathbb F}_q}
\newcommand{\fqt}{{\mathbb F}_{q^t}}
\newcommand{\ba}{{\bf a}}
\newcommand{\bb}{{\bf b}}
\newcommand{\bc}{{\bf c}}
\newcommand{\bg}{{\bf g}}
\newcommand{\bx}{{\bf x}}
\newcommand{\by}{{\bf y}}
\title{LCD codes over ${\mathbb F}_q $ are as good\\
 as linear codes for $q$ at least four}
\author{Ruud Pellikaan
\footnote{Department of Mathematics and Computing Science, Eindhoven University of Technology, P.O. Box 513, 5600 MB  Eindhoven, The Netherlands. E-mail: g.r.pellikaan@tue.nl}}
\begin{document}

\maketitle

\begin{abstract}
\noindent The hull $H(C)$ of a linear code $C$ is defined by $H(C)=C \cap C^\perp$.
A linear code with a complementary dual (LCD) is a linear code with $H(C)=\{0\}$.
The dimension of the hull of a code is an invariant under permutation equivalence.
For binary and ternary codes the dimension of the hull is also invariant under monomial equivalence
and we show that this invariant is determined by the extended weight enumerator of the code.\\
The hull of a code is not invariant under monomial equivalence if $q\geq 4$.
We show that every ${\mathbb F}_q $-linear code is monomial equivalent with an LCD code in case $q \geq 4$.
The proof uses techniques from Gr\"obner basis theory.
We conclude that if there exists an ${\mathbb F}_q $-linear code with parameters $[n,k,d]_q$ and $q \geq 4$,
then there exists also a LCD code with the same parameters.
Hence this holds for optimal and MDS codes.
In particular there exist LCD codes that are above the Gilbert-Varshamov bound if $q$ is a square and $q\geq 49$
by the existence of such codes that are algebraic geometric.\\
Similar results are obtained with respect to Hermitian LCD codes.
\end{abstract}

\newpage

\section{Introduction}

\noindent The hull $H(C)$ of a linear code $C$ is defined by $H(C)=C \cap C^\perp$.
A linear code with a complementary dual (LCD) is a linear code with $H(C)=\{0\}$.
LCD codes were introduced and studied by Massey \cite{massey:1992} who showed that
these codes are optimal for the two-user binary adder channel (BAC) and that they are asymptotically good.
Sendrier \cite{sendrier:2004} showed that they meet the Gilbert-Varshamov bound and he
\cite{sendrier:1997,sendrier:2000,sendrier:2004,sendrier:2013a} considered the hull of a code
to find a permutation between two equivalent codes
with an application to code-based public key cryptosystems  \cite{sendrier:2002}. \\
Carlet and Guilley gave applications of LCD codes in side-channel attacks (SCA) and fault non-invasive attacks \cite{bringer:2014,carlet:2014,carlet:2016,li:2017},
and may be used for the construction of lattices \cite{hou:2016}. LCD rank metric codes have applications in network coding \cite{braun:2013,kandasamy:2012}.\\
Constructions  of LCD codes are known for the following types of codes:
cyclic \cite{li:2016a,li:2016b,sangwisut:2015,yang:1994}, consta-cyclic \cite{chen:2014,sari:2016} and quasi-cyclic \cite{esmaeili:2009,gueneri:2016},
graphic \cite{leemans:2016} and algebraic geometric \cite{mesnager:2016}.
Optimal and MDS codes that are LCD are considered in
\cite{boonniyoma:2016,carlet:2017,chen:2017,dougherty:2015,galvez:2017,jin:2017,lina:2016,liu:2016,sari:2016,sok:2017}.\\
Apart from the standard or Euclidean inner product,
the notion of LCD codes is researched with respect to the Hermitian inner product in case $q$ is a square \cite{boonniyoma:2016,carlet:2017,li:2017,sharma:2017}.\\
The dimension of the hull of a code is an invariant under permutation equivalence.
For binary and ternary codes the dimension of the hull is also invariant under monomial equivalence
and we show that this invariant is determined by the extended weight enumerator of the code.\\
The hull of a code is not invariant under monomial equivalence if $q\geq 4$.\\
We show that every ${\mathbb F}_q $-linear code is monomial equivalent with an LCD code in case $q \geq 4$,
by means of the theory of Gr\"{o}ber bases generalizing a technique in \cite{carlet:2017}.
We conclude that if there exists an ${\mathbb F}_q $-linear code with parameters $[n,k,d]_q$ and $q \geq 4$,
then there exists also a LCD code with the same parameters.
In particular this holds for optimal and MDS codes.
And there exist LCD codes that are above the Gilbert-Varshamov bound if $q$ is a square and $q\geq 49$
by the existence of such codes that are algebraic geometric.\\
Similar results are obtained with respect to Hermitian LCD codes.

\section{Preliminaries}

\noindent
Let  $q $ be a power of a prime. Then $\fq$ denotes the finite field with $q$ elements.\\
An $[n,k,d]_q$ code is a $k$-dimensional subspace of $\fq^n$ of minimum distance $d$.

\begin{defi}
The {\em standard} or {\em Euclidean inner product} of $\ba$ and $\bb$ in $\fq^n$ is defined by
$$
\ba\cdot \bb = \sum_{i=1}^n a_ib_i.
$$
Now $A \perp B$ if and only if $\ba \cdot \bb = 0$ for all $\ba \in A$ and $\bb \in B$.
The {\em dual} of $C$ is defined by
$$
C^\perp = \{ \ \bx \in \fq^n \ | \ \bx\cdot \bc =0 \mbox{ for all } \bc \in C \ \}.
$$
Let $q$ be an even power of a prime. Define the conjugate $\bar{x}$ of $x \in \fq$ by  $\bar{x}=x^{\sqrt{q}}$.
The {\em Hermitian inner product} of $\ba$ and $\bb$ in $\fq^n$ is defined by
$$
\ba \cdot_H \bb = \sum_{i=1}^n a_i\bar{b}_i.
$$
And $A \perp_H B$ if and only if $\ba \cdot_H \bb = 0$ for all $\ba \in A$ and $\bb \in B$.
The {\em Hermitian dual} of $C$ is defined by
$$
C^{\perp_H} = \{ \ \bx \in \fq^n \ | \ \bx\cdot_H \bc =0 \mbox{ for all } \bc \in C \ \}.
$$
\end{defi}

\begin{defi}Let $C$ be an $\fq $-linear code. \\
The {\em hull} of $C$ is defined by $H(C)=C \cap C^\perp $.
If $H(C) = \{ {\bf 0 } \} $, then $C$ is called a {\em linear} code with {\em complementary dual} or an LCD code.\\
The code $C$ is called {\em Hermitian} LCD if $C \cap C^{\perp_H} = \{ 0 \}$.
\end{defi}

\begin{defi}A {\em permutation} matrix  is a square matrix with zeros and ones, such that in every
row (and in every column) there is exactly one element equal to one.
A {\em diagonal} matrix is a square matrix with zeros outside its diagonal.
A {\em monomial} matrix  is a square matrix such that in every
row (and in every column) there is exactly one nonzero element.
\end{defi}

\begin{rem}A permutation matrix and an invertible diagonal matrix are special monomial matrices.
\end{rem}

\begin{defi}Let $C_1$ and $C_2$ be $\fq $-linear codes in $\fq^n$.
Then $C_1$ is called {\em equivalent} or {\em permutational equivalent} to $C_2$,
and is denoted by $C_1\equiv C_2$ if there exists a permutation matrix $\Pi$ such that $\Pi(C_1)=C_2$.
The code $C_1$ is {\em scalar equivalent} with $C_2$ if there is an invertible diagonal matrix $D$
such that $D(C_1)=C_2$.
And $C_1$ is called {\em monomial equivalent} to $C_2$,
denoted by $C_1\cong C_2$ if there exists a monomial matrix $M$ such that $M(C_1)=C_2$.
\end{defi}

\begin{defi}Let $\bx$, $\by  \in \fq ^n$.
Then the {\em star product} is defined by
$$
\bx *\by =(x_1y_1,\ldots ,x_ny_n).
$$
Let $\bx \in \fq ^n$ with nonzero entries, then $\bx^{-1}=(x_1^{-1},\ldots ,x_n^{-1})$.\\
Let $C \subseteq \fq^n$. Then $\bx *C = \{ \ \bx *\bc \ | \ \bc \in C \ \}$.\\
Let $G$ be an $k\times n$ matrix with $i$-th row $\bg_i$.
Then $\bx*G$ is the $k\times n$ matrix with $i$-th row $\bx*\bg_i$.
\end{defi}

\begin{rem}Note that $C_1$ and $C_2$ are scalar equivalent if and only if
there exists an $\bx $ with nonzero entries such that $C_2=\bx * C_1$.\\
Let $C$ be an $\fq$-linear code of length $n$ and $\bx \in \fq ^n$ with nonzero entries.\\
If $G$ is a generator matrix and $H$ a parity check matrix of $C$,
then $\bx*G$ is a generator matrix and $\bx ^{-1}*H$ is a parity check matrix of $\bx*C$.
\end{rem}

\section{The dimension of the hull of a code}
The dimension of the hull of a code is considered in \cite{sendrier:1997}.\\
The dimension of $H(C)$ is an invariant of permutational equivalent codes but not for monomial equivalent codes,
except for $q=2$ and $q=3$, where it can be expressed in terms of the Tutte polynomial of the matroid of $C$.\\
Let $M(C)$ be the matroid of the code $C$ and $t_{M(C)}(X,Y)$ the two variable Tutte polynomial of the matroid.
For a binary code $C$ we have
$$
|t_{M(C)}(-1,-1)|= 2^{\dim H(C)}.
$$
See \cite{rosenstiehl:1978} for graph codes and \cite[Proposition 6.5.4]{brylawski:1992} for arbitrary codes.\\
For a ternary code $C$ we have
$$
|t_{M(C)}(j,j^2)| = (\sqrt{3})^{\dim H(C)},
$$
where $j=e^{2\pi i/3}$ is a primitive third root of unity of the complex numbers and $|z|$ is the modulus of the complex number $z$. See \cite{jaeger:1989}.\\
$W_C(X,Y)$, the weight enumerator  of an $\fq $-linear code $C$ can be expressed in terms of $t_{M(C)}(X,Y)$ as follows:
\[ W_C(X,Y)=(X-Y)^kY^{n-k}\ t_C\left(\frac{X+(q-1)Y}{X-Y},\frac{X}{Y}\right)\; . \]
See Greene \cite{greene:1976}. But the converse is not true.

\begin{ex}\label{ex-1}
The dimension of the hull of a binary code $C$ cannot be expressed in terms of $W_C(X,Y)$.
Consider the two codes $C_1$ and $C_2$ of \cite[Example 1]{simonis:1994} with generator matrices
$G_1=(I_3|I_3)$ and $G_2=(I_3|J_3)$, where $J_3$ is the $3\times 3$ all ones matrix.
Both have the same weight enumerator:
$$
W_{C_1}(X,Y)=W_{C_2}(X,Y)=X^6+3X^4Y^2+3X^2Y^4+Y^6.
$$
Furthermore $C_1$ is self dual, so $\dim H(C_1)=3$.
But $H(C_2)$ is generated by the all ones vector, so $\dim H(C_2)=1$.
\end{ex}

\begin{rem}
Let $t_{C}(X,Y)=t_{M(C)}(X,Y)$. Then
$t_{C}(X,Y)$ can be expressed in $W_C(X,Y,T)$, the extended weight enumerator  of $C$ and vice versa. See \cite[Theorems 5.9 and 5.10]{jurrius:2013}.
\[ W_C(X,Y,T)=(X-Y)^kY^{n-k}\ t_C\left(\frac{X+(T-1)Y}{X-Y},\frac{X}{Y}\right)\; . \]
\[ t_C(X,Y)=Y^n(Y-1)^{-k}W_C(1,Y^{-1},(X-1)(Y-1))\; . \]
So we have in the binary case
\[ |W_C(1,-1,4)|=2^k|t_C(-1,-1)|=2^{k+h}. \]
where $k$ and $h$ are the dimensions of $C$ and $H(C)$, respectively.
\end{rem}

\begin{ex}\label{ex-2}For the extended weight enumerator $W_C(X,Y,T)-X^n $ is divisible by $T-1$.
Define $\bar{W}_C(X,Y,T)=(W_C(X,Y,T)-X^n)/(T-1)$.\\
In the previous Example \ref{ex-1} we have that
$$
\bar{W}_{C_1}(X,Y,T)=3X^4Y^2+3(T-1)X^2Y^4+(T-1)^2Y^6 
$$
So $W_{C_1}(1,-1,4)=2^6$. And $\bar{W}_{C_2}(X,Y,T)$ is equal to
$$
3X^4Y^2+(T-2)X^3Y^3+3X^2Y^4+3(T-2)XY^5+(T^2-3T+3)Y^6.
$$
So $W_{C_2}(1,-1,4)=2^4$.
\end{ex}

\begin{rem}
For the ternary case  we have
\[ t_C(j,j^2)=
j^{2n}(j^2-1)^{-k}W_C(1,j) \; , \]
since $j^{-2}=j$ and $(j-1)(j^2-1)=3$.
So the dimension of the hull of a ternary code $C$ can be expressed in terms of $W_C(X,Y)$.
Furthermore
\[ W_C(1,j)= a_0 + a_1j + a_2j^2\; , \]
where $a_0=\sum_{i=0}^{n/3} A_{3i}$, $a_1=\sum_{i=0}^{n/3} A_{3i+1}$ and $a_2=\sum_{i=0}^{n/3} A_{3i+2}$.
\end{rem}

\section{Existence of LCD codes}

\begin{prop}\label{prop-GGT}Let $C$ be an $\fq$-linear $[n,k]$ code with generator matrix $G$.
Let $h$ be the dimension of $H(C)$ and $r=k-h$. Then $C$ has a generator matrix $G_0$ such that
$$
G_0G_0^T=
\left(
\begin{array}{c|c}
O_{h\times h}&O_{h\times r} \\
\hline
O_{r \times h} &P \\
\end{array}
\right),
$$
where $O_{l\times m}$ is the all zeros $l\times m$ matrix and $P$ is an invertible $r \times r$ matrix.
Furthermore the rank of $GG^T$ is $r$ for every generator matrix $G$ of $C$.
\end{prop}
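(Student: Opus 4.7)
\medskip
\noindent\textbf{Proof plan.} The structural part of the statement essentially asks to construct a basis of $C$ that is adapted to the hull, so the natural plan is to start with a basis $\bg_1,\ldots,\bg_h$ of $H(C)$ and extend it to a basis $\bg_1,\ldots,\bg_k$ of $C$. Take $G_0$ to be the $k\times n$ matrix whose $i$-th row is $\bg_i$. Since each of the first $h$ rows lies in $H(C)\subseteq C^\perp$, it is orthogonal with respect to the Euclidean inner product to every vector of $C$, and in particular to each $\bg_j$. Reading off entries of $G_0G_0^T$, this gives that the first $h$ rows of $G_0G_0^T$ vanish, and by symmetry the first $h$ columns vanish as well. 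Consequently $G_0G_0^T$ has the required block form with some (a priori unknown) $r\times r$ matrix $P$ in the lower-right corner.

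The main obstacle is then to argue that $P$ is invertible, for this is where the choice of $h$ as the exact dimension of the hull is used. I would argue by contradiction: if $P$ is singular, pick scalars $\lambda_{h+1},\ldots,\lambda_k$, not all zero, with $(\lambda_{h+1},\ldots,\lambda_k)P=\bz$, and consider
\[
\bc=\sum_{i=h+1}^{k}\lambda_i\bg_i\in C.
\]
From the vanishing of the last $r$ coordinates of $(\lambda_{h+1},\ldots,\lambda_k)P$ one gets $\bc\cdot\bg_j=0$ for $j=h+1,\ldots,k$, while for $j\le h$ this orthogonality holds automatically because $\bg_j\in H(C)\subseteq C^\perp$. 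Hence $\bc$ is orthogonal to every element of a basis of $C$, so $\bc\in C^\perp$, and therefore $\bc\in H(C)$. But $\bc$ is a linear combination of $\bg_{h+1},\ldots,\bg_k$, and the decomposition $C=\langle\bg_1,\ldots,\bg_h\rangle\oplus\langle\bg_{h+1},\ldots,\bg_k\rangle$ forces $\bc=\bz$ and hence all $\lambda_i=0$, a contradiction.

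For the final assertion, I would use the standard fact that any two generator matrices $G$ and $G_0$ of $C$ are related by $G=UG_0$ for an invertible $k\times k$ matrix $U$, so
\[
GG^T=UG_0G_0^TU^T.
\]
Since left and right multiplication by invertible matrices preserve rank, the rank of $GG^T$ equals the rank of $G_0G_0^T$, and the block form obtained in the first part makes this rank equal to $\operatorname{rank}(P)=r$. The only subtle point in the whole argument is the invertibility of $P$, which I expect to be the main (but short) hurdle; everything else is bookkeeping on block matrices and change of basis.
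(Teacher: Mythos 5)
Your proposal is correct and follows essentially the same route as the paper: extend a basis of $H(C)$ to a basis of $C$, read off the zero blocks of $G_0G_0^T$, prove invertibility of $P$ by contradiction using the maximality of $\dim H(C)$, and transfer the rank statement to an arbitrary generator matrix via $G=UG_0$. Your contradiction step, using a left null vector of $P$ directly to produce a nonzero element of $H(C)\cap\langle\bg_{h+1},\ldots,\bg_k\rangle$, is a slightly cleaner rendition of the paper's construction of an invertible matrix $M$ whose first row annihilates $P$, but the underlying idea is identical.
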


\begin{proof}Let $\bg_1, \ldots \bg_h$ be a basis of $H(C)$. Now $H(C)$ is a sub space of $C$,
so we can extend $\bg_1, \ldots \bg_h$ to a basis $\bg_1, \ldots \bg_k$ of $C$.
Let $G_0$ be the $k \times n$ matrix with $i$-th row $\bg_i$. Then $G_0$ is a generator matrix of $C$ and
$G_0G_0^T$ is a $k \times k$ matrix with $\bg_i \cdot \bg_j$ at position $(i,j)$.
Now  $\bg_i \cdot \bg_j=0$ if $i\leq h$ or $j\leq h$,
since $\bg_i, \bg_j \in C$ for all $1 \leq i,j \leq n$ and $\bg_i \in C^\perp$ if $i\leq h$ and $\bg_j \in C^\perp$ if $j\leq h$.
Therefore $G_0G_0^T$ has the form as stated in the Proposition.\\
Now suppose that $P$ is not invertible. Then the rows of $P$ are dependent and
there exists an invertible $r\times r$ matrix $M$ such that
the first row of $MP$ consists of zeros. But then also the the first row of $MPM^T$ consists of zeros.
Let $G_f$ be the submatrix of $G$ consisting of the first $h$ rows of $G$,
and $G_l$ the submatrix of $G$ consisting of the last $r$ rows of $G$.
Let $G_1$ be the $k \times n$ matrix with $G_f$ in the first $h$ rows and $MG_l$ in the last $r$ rows.
Then $G_1$ is also a generator matrix of $C$, since $M$ is invertible. Moreover
$$
G_1G_1^T=
\left(
\begin{array}{c|c}
O_{h\times h}&O_{h\times r} \\
\hline
O_{r \times h} &MPM^T \\
\end{array}
\right).
$$
Let $\bg'_i$ be the $i$-th row of $G_1$. Then $\bg'_i \cdot \bg'_j=0$ if $i\leq h+1$ and $j\leq k$,
since the first row of $MPM^T$ consists of zeroes. Hence $H(C)$ consists of $h+1$ independent elements
$\bg'_1, \ldots \bg'_h, \bg'_{h+1}$ which is a contradiction.
Therefore $P$ is invertible and $G_0G_0^T$ has rank $r$.\\
Now suppose that $G$ is a generator matrix of $C$. Then there exists an invertible $k \times k$ matrix $N$
such that $G=NG_0$. Hence $GG^T=NG_0G_0^TN^T$. The rank of $G_0G_0^T$ is $r$ and this rank does not change
by elementary row operations, that is by multiplication on the left by $N$
and also not by elementary column operations, that is by multiplication on the right by $N^T$.
Therefore $GG^T$ has also rank $r$.\\
Remark that with the theory of symmetric forms one can show a more restrictive normal form of $P$. See \cite[Chap. XIV]{lang:1965}.
\end{proof}

\begin{cor}\label{cor-GGT}Let $C$ be an $\fq$-linear $[n,k]$ code with generator matrix $G$.\\
Then $C$ is an LCD code if and only if $GG^T$ is nonsingular. 
\end{cor}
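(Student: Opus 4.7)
The plan is to derive the corollary as an almost immediate consequence of Proposition~\ref{prop-GGT}. The proposition tells us that for \emph{any} generator matrix $G$ of $C$, the rank of $GG^T$ equals $r = k - h$, where $h = \dim H(C)$. Since $GG^T$ is a $k \times k$ matrix, it is nonsingular if and only if its rank equals $k$, i.e.\ if and only if $h = 0$. By definition, $h = 0$ means $H(C) = C \cap C^\perp = \{\bz\}$, which is precisely the condition for $C$ to be LCD. So the equivalence follows directly.

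Concretely, I would organize the proof in two short implications. For the forward direction, assume $C$ is LCD, so $h = 0$ and $r = k$. Then by the proposition the rank of $GG^T$ is $k$, hence $GG^T$ is nonsingular. For the converse, assume $GG^T$ is nonsingular, so its rank is $k$; again by the proposition this rank equals $k - h$, forcing $h = 0$ and thus $H(C) = \{\bz\}$, i.e.\ $C$ is LCD.

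There is really no obstacle here: the whole content is packaged in the ``rank of $GG^T$ is $r$ for every generator matrix'' clause of Proposition~\ref{prop-GGT}. The only thing worth double-checking is that the statement applies to an arbitrary generator matrix $G$, not just the particular $G_0$ constructed in the proof of the proposition — but this invariance is explicitly part of the proposition's conclusion, established there via the change-of-basis identity $GG^T = N G_0 G_0^T N^T$ with $N$ invertible. Hence the corollary requires nothing beyond quoting this fact.
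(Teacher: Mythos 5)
Your proposal is correct and takes essentially the same route as the paper, which simply cites Massey's original result and notes that the corollary is a consequence of Proposition~\ref{prop-GGT}; you have merely spelled out the one-line deduction (rank of $GG^T$ equals $k-h$, so nonsingularity is equivalent to $h=0$) that the paper leaves implicit.
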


\begin{proof}This is \cite[Proposition 1]{massey:1992} and a consequence of Proposition \ref{prop-GGT}.
\end{proof}

\begin{cor}\label{cor-HGGT}Let $C$ be an $\fq$-linear $[n,k]$ code with generator matrix $G$.\\
Then $C$ is a Hermitian LCD code if and only if $G\bar{G}^T$ is nonsingular. 
\end{cor}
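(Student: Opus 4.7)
The plan is to mirror the proof of Corollary \ref{cor-GGT} verbatim, inserting conjugations where needed. The heart of the argument is to establish a Hermitian analog of Proposition \ref{prop-GGT}: writing $h=\dim(C\cap C^{\perp_H})$ and $r=k-h$, there should exist a generator matrix $G_0$ of $C$ such that
$$
G_0\bar{G_0}^T=\left(\begin{array}{c|c} O_{h\times h} & O_{h\times r} \\ \hline O_{r\times h} & P \end{array}\right),
$$
with $P$ an invertible $r\times r$ matrix, and the rank of $G\bar{G}^T$ should equal $r$ for every generator matrix $G$ of $C$.

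To build $G_0$, I would extend a basis $\bg_1,\ldots,\bg_h$ of $C\cap C^{\perp_H}$ to a basis $\bg_1,\ldots,\bg_k$ of $C$. The $(i,j)$-entry of $G_0\bar{G_0}^T$ is $\bg_i\cdot_H\bg_j$, which vanishes whenever $\min(i,j)\leq h$, since those rows lie in $C^{\perp_H}$; so the block form is automatic. Invertibility of $P$ follows by the contradiction argument of Proposition \ref{prop-GGT}: if $P$ were singular, choose an invertible $r\times r$ matrix $M$ over $\fq$ such that the first row of $MP\bar{M}^T$ vanishes, and replace the last $r$ rows of $G_0$ by $MG_l$; the new $G_1\bar{G_1}^T$ has $MP\bar{M}^T$ in the lower-right block, so the first $h+1$ rows of $G_1$ are Hermitian-orthogonal to all of $C$, contradicting $\dim(C\cap C^{\perp_H})=h$. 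For rank invariance, any generator matrix $G$ of $C$ equals $NG_0$ for some invertible $N$; since $\bar{G}^T=\bar{G_0}^T\bar{N}^T$, we get $G\bar{G}^T=NG_0\bar{G_0}^T\bar{N}^T$, and both $N$ and $\bar{N}^T$ are invertible, so the rank is preserved.

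The corollary is then immediate: $C$ is Hermitian LCD iff $h=0$ iff $r=k$ iff $G\bar{G}^T$ is nonsingular. The main obstacle is purely bookkeeping around conjugation—namely verifying $\overline{AB}=\bar A\,\bar B$ for matrices and $\overline{\bar x}=x^q=x$ in $\fq$—so that $G\bar{G}^T$ transforms as $N\,G_0\bar{G_0}^T\,\bar N^T$ under change of generator matrix. Once this is set up, the Euclidean proof carries over with essentially no extra work.
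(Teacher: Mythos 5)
Your argument is correct, but it is a genuinely different route from the paper: the paper does not prove this corollary at all, it simply cites \cite[Theorem 3.4]{boonniyoma:2016}, whereas you give a self-contained proof by transporting Proposition \ref{prop-GGT} to the Hermitian setting. Your adaptation is sound. The block form of $G_0\bar{G_0}^T$ works because, although $\ba\cdot_H\bb$ is not symmetric, one has $\overline{\ba\cdot_H\bb}=\bb\cdot_H\ba$, so an entry vanishes exactly when its transposed partner does; the singular-$P$ contradiction goes through because the lower-right block of $G_1\bar{G_1}^T$ is $MP\bar{M}^T$, whose first row vanishes once the first row of $MP$ does (your wording slightly conflates choosing $M$ so that the first row of $MP$ vanishes with the vanishing of the first row of $MP\bar{M}^T$, but the argument is the intended one); and the rank invariance uses only that $\bar{N}$ is invertible whenever $N$ is, since conjugation is a field automorphism. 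What your approach buys is that the paper becomes self-contained and the parallel with the Euclidean Corollary \ref{cor-GGT} is made explicit; what the paper's citation buys is brevity. Either is acceptable, and your version could serve as a proof of the Hermitian analogue of Proposition \ref{prop-GGT} if one wished to include it.
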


\begin{proof}This is \cite[Theorem 3.4]{boonniyoma:2016}.
\end{proof}

\noindent
The multivariate notation is used for polynomials, that means that $X$ is an abbreviation of $(X_1, \ldots , X_n)$ and
the polynomial $f(X)$ of $f(X_1, \ldots , X_n)$.

\begin{prop}Let $f(X)$ be a nonzero polynomial of $\fq [X_1,\ldots ,X_n]$
such that the degree of $f(X)$ with respect to $X_j$ is at most $q-1$ for all $j$.
Then there exists a $\bx \in \fq ^n$ such that $f(\bx)\not=0$.
\end{prop}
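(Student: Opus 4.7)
The plan is to prove this by induction on the number of variables $n$, a standard argument for this kind of statement (which essentially says that the reduction map from $\fq[X_1,\ldots,X_n]/(X_1^q-X_1,\ldots,X_n^q-X_n)$ to functions $\fq^n \to \fq$ is injective).

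For the base case $n=1$, I would observe that a nonzero univariate polynomial $f(X_1) \in \fq[X_1]$ of degree at most $q-1$ has at most $q-1$ roots in $\fq$ (by the standard factor-theorem bound). Since $|\fq|=q$, there must be at least one element $x_1 \in \fq$ with $f(x_1) \neq 0$.

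For the inductive step, assume the result for $n-1$ variables and write
\[
f(X_1,\ldots,X_n) \;=\; \sum_{i=0}^{q-1} f_i(X_1,\ldots,X_{n-1})\, X_n^{\,i},
\]
which is valid because the $X_n$-degree is at most $q-1$. Since $f$ is nonzero, at least one coefficient $f_{i_0}(X_1,\ldots,X_{n-1})$ is a nonzero polynomial, and each $f_i$ still has degree at most $q-1$ in every remaining variable $X_1,\ldots,X_{n-1}$. By the induction hypothesis, there exists $(x_1,\ldots,x_{n-1}) \in \fq^{n-1}$ with $f_{i_0}(x_1,\ldots,x_{n-1}) \neq 0$. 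Substituting these values yields a univariate polynomial $g(X_n)=f(x_1,\ldots,x_{n-1},X_n) \in \fq[X_n]$ of degree at most $q-1$ which is nonzero (because its coefficient of $X_n^{i_0}$ is nonzero). The base case then furnishes $x_n \in \fq$ with $g(x_n) \neq 0$, and $\bx=(x_1,\ldots,x_n)$ is the desired point.

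There is no real obstacle here: the whole point of the degree-in-each-variable hypothesis is exactly what makes the induction go through (without it one would have to worry about polynomials like $X_1^q - X_1$ that vanish identically on $\fq$). The only thing one has to be careful about is to partition the polynomial according to the last variable and invoke induction on the coefficient polynomial that is nonzero, rather than on $f$ itself.
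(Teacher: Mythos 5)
Your proof is correct: the induction on the number of variables is sound, the base case is the standard root-counting bound for univariate polynomials, and you correctly handle the one subtle point, namely that the induction hypothesis must be applied to a nonzero coefficient polynomial $f_{i_0}$ rather than to $f$ itself, so that the specialized univariate polynomial $g(X_n)$ is guaranteed to be nonzero. For comparison: the paper does not prove this proposition at all; it simply cites Lang's \emph{Algebra} (Chapter V, \S 4, Theorem 5), where essentially the same inductive argument appears. So your write-up supplies a self-contained elementary proof where the paper defers to a reference. It is also worth observing that your induction adapts immediately to the paper's Proposition 4.4 (degree at most $q-2$ in each variable, point with all coordinates nonzero, since a nonzero univariate polynomial of degree at most $q-2$ cannot vanish on all $q-1$ nonzero elements), which the paper instead proves with the heavier machinery of Gr\"obner bases and footprints; your approach would give a more elementary route to that result as well.
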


\begin{proof}See \cite[V, \S 4, Theorem 5]{lang:1965}.
\end{proof}

\begin{prop}\label{prop-fX=0}Let $f(X)$ be a nonzero polynomial of $\fq [X_1,\ldots ,X_n]$
such that the degree of $f(X)$ with respect to $X_j$ is at most $q-2$ for all $j$.
Then there exists a $ \bx \in (\fq \setminus \{ 0 \}) ^n$ such that $f (\bx )\not=0$.
\end{prop}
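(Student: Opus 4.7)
The plan is to reduce Proposition \ref{prop-fX=0} to the previous proposition (with the weaker degree bound $q-1$) by a simple polynomial modification that forces the produced point to have no zero coordinates.

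More precisely, I would form the auxiliary polynomial
\[
g(X) = f(X) \cdot X_1 X_2 \cdots X_n \; \in \; \fq[X_1, \ldots, X_n].
\]
Since $\fq[X_1,\ldots,X_n]$ is an integral domain and both factors are nonzero, $g$ is nonzero. Moreover, the degree of $g$ with respect to $X_j$ is exactly one more than that of $f$, hence at most $(q-2)+1 = q-1$ for every $j$. So $g$ satisfies the hypothesis of the previous proposition.

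I would then apply that proposition to obtain some $\bx \in \fq^n$ with $g(\bx) \neq 0$. But $g(\bx) = f(\bx) \cdot x_1 x_2 \cdots x_n$, so $g(\bx) \neq 0$ forces $x_j \neq 0$ for all $j$ (giving $\bx \in (\fq \setminus \{0\})^n$) and simultaneously $f(\bx) \neq 0$, which is exactly the desired conclusion.

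There is essentially no obstacle here: the only content is the bookkeeping observation that the slack between the bound $q-2$ and the bound $q-1$ is precisely what is needed to absorb multiplication by the monomial $X_1 \cdots X_n$. The integral-domain property guarantees that this multiplication does not accidentally kill $f$, and the hypothesis that $f \neq 0$ is what makes the construction work.
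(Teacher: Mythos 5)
Your proof is correct, but it takes a genuinely different route from the paper. You reduce the statement to the preceding proposition (the bound $q-1$ on $\fq^n$, cited from Lang) by multiplying $f$ by the monomial $X_1\cdots X_n$: the nonzero product $g=f\cdot X_1\cdots X_n$ still has degree at most $q-1$ in each variable, and any point where $g$ does not vanish automatically has all coordinates nonzero and is a non-root of $f$. Every step checks out --- $\fq[X_1,\ldots,X_n]$ is a domain, so $g\neq 0$, and the degree bookkeeping is exactly right. The paper instead argues via Gr\"obner basis theory: it shows that $\{X_j^{q-1}-1 \mid j=1,\ldots,n\}$ is a Gr\"obner basis of the vanishing ideal $I_{q,n}$ of $(\fq\setminus\{0\})^n$ by comparing the size of the footprint $(\{0,1,\ldots,q-2\})^n$ with the number of zeros, and then observes that the leading monomial of $f$ lies in the delta set, so $f\notin I_{q,n}$ and hence cannot vanish on all of $(\fq\setminus\{0\})^n$. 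Your argument is shorter and more elementary, leaning entirely on the already-quoted result over $\fq^n$; the paper's argument is heavier but identifies the vanishing ideal of the torus exactly and is the technique the author explicitly advertises (and reuses in spirit) for the main theorems, and it adapts directly to other product point sets where no convenient auxiliary monomial trick is available.
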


\begin{proof}For the theory of Gr\"{o}bner bases and the concepts of a footprint and delta set we refer to  \cite{adams:1994,cox:2005,cox:2015,hoholdt:19998}.\\\
Let $f(X)$ be a nonzero polynomial of $\fq [X_1,\ldots ,X_n]$
such that the degree of $f(X)$ with respect to $X_j$ is at most $q-2$ for all $j$.\\
Consider the ideal $I_{q,n}$ in $ \fq [X_1,\ldots ,X_n]$ generated by the $X_j^{q-1}-1$, $j=1, \ldots ,n$.
Then $(\fq \setminus \{ 0 \}) ^n$ is the zero set of $I_{q,n}$, and $I_{q,n}$
is the vanishing ideal of $(\fq \setminus \{ 0 \}) ^n$.
The footprint or delta set of the $X_j^{q-1}-1$, $j=1, \ldots ,n$ is equal to $(\{ 0, 1, \ldots ,q-2\})^n$ and contains
the delta set of the ideal of  $I_{q,n}$. The delta set of $I_{q,n}$ is finite and has size $(q-1)^n$,
the size of the zero set of $I_{q,n}$ and it is equal to the size of delta set of the $X_j^{q-1}-1$, $j=1, \ldots ,n$ .
Hence the delta set of $I_{q,n}$ is equal to $(\{ 0, 1, \ldots ,q-2\})^n$ and  $\{ X_j^{q-1}-1 | j=1, \ldots ,n \}$ is a
Gr\"obner basis of $I_{q,n}$ with respect tot the total degree lex order. \\
Now $f(X)$ is a nonzero element of $\fq [X_1,\ldots ,X_n]$.
The exponent of the leading monomial of $f(X)$ is in the delta set of $I_{q,n}$,
since the degree of $f(X)$ with respect to $X_j$ is at most $q-2$ for all $j$.
Therefore $ f(X)\not\in I_{q,n}$.\\
Suppose that $f (\bx ) =0$ for all $ \bx \in (\fq \setminus \{ 0 \}) ^n$.
Then $f(X)$ is in the vanishing ideal of $(\fq \setminus \{ 0 \}) ^n$, which is $I_{q,n}$. So $ f(X)\in I_{q,n}$. This is a contradiction.\\
Therefore there exists a $ \bx \in (\fq \setminus \{ 0 \}) ^n$ such that $f (\bx )\not=0$.
\end{proof}

\begin{thm}\label{thm-LCD}Let $C$ be an $\fq$-linear code with $q\geq 4$. Then $C$ is monomial equivalent with an LCD code.
\end{thm}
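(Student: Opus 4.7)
The plan is to exploit scalar equivalence: I aim to produce $\bx \in (\fq\setminus\{0\})^n$ such that the code $\bx * C$ is LCD. Since $\bx * C$ is scalar, and hence monomial, equivalent to $C$, this proves the theorem. Fixing a generator matrix $G$ of $C$, the code $\bx * C$ is generated by $\bx*G$, and the $(i,j)$-entry of $(\bx*G)(\bx*G)^T$ is $\sum_{l=1}^n x_l^2 g_{il}g_{jl}$. By Corollary \ref{cor-GGT}, finding an LCD scalar equivalent amounts to finding such an $\bx$ with $\det\bigl(G\,\mathrm{diag}(x_1^2,\ldots,x_n^2)\,G^T\bigr) \neq 0$.

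This naturally leads to introducing the polynomial
$$
f(X_1,\ldots,X_n) = \det\bigl(G\,\mathrm{diag}(X_1^2,\ldots,X_n^2)\,G^T\bigr) \in \fq[X_1,\ldots,X_n]
$$
and searching for a point in $(\fq\setminus\{0\})^n$ on which $f$ does not vanish. Proposition \ref{prop-fX=0} is tailor-made for this, provided $f$ is a nonzero polynomial whose degree in each variable is at most $q-2$. Both properties can be read off from the Cauchy--Binet expansion
$$
f(X) = \sum_{|S|=k} (\det G_S)^2 \prod_{l \in S} X_l^2,
$$
where $S$ ranges over the $k$-element subsets of $\{1,\ldots,n\}$ and $G_S$ denotes the $k \times k$ submatrix of $G$ on the columns indexed by $S$. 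This expression shows at a glance that $f$ has degree at most $2$ in each $X_j$, and since $G$ has rank $k$ at least one $\det G_S$ is nonzero, so $f$ itself is nonzero.

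The main obstacle I anticipated, namely controlling the per-variable degree, is exactly what forces the hypothesis $q \geq 4$: precisely in this range one has $q - 2 \geq 2$, so Proposition \ref{prop-fX=0} applies and supplies $\bx \in (\fq\setminus\{0\})^n$ with $f(\bx) \neq 0$. By Corollary \ref{cor-GGT} the code $\bx * C$ is then LCD, and since it is monomial equivalent to $C$ the theorem follows. The cutoff $q \geq 4$ is consistent with the earlier observation that in the binary and ternary cases the hull dimension is a monomial invariant, so no such scalar-equivalence trick could ever succeed there.
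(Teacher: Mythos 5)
Your proof is correct and rests on the same engine as the paper's: reduce the problem to finding a point of $(\fq\setminus\{0\})^n$ at which a determinant polynomial of per-variable degree $2\leq q-2$ does not vanish, then invoke Proposition \ref{prop-fX=0} together with Corollary \ref{cor-GGT}. Where you diverge is in the setup and in the certificate that $f$ is nonzero. The paper first passes to a systematic generator matrix $G=(I_k|B)$ and scales only the first $k$ coordinates, so that $f(X)=\det\bigl(D(X_1^2,\ldots,X_k^2)+BB^T\bigr)$ visibly has leading monomial $X_1^2\cdots X_k^2$ and is therefore nonzero. You instead scale all $n$ coordinates and read off nonvanishing from the Cauchy--Binet expansion $f(X)=\sum_{|S|=k}(\det G_S)^2\prod_{l\in S}X_l^2$: distinct subsets $S$ contribute distinct monomials, so no cancellation can occur, and the rank-$k$ hypothesis supplies a nonzero minor whose square is nonzero. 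Your version buys independence from the choice of generator matrix (no ``without loss of generality'' reduction) at the cost of working with $n$ variables instead of $k$ and of invoking Cauchy--Binet; the paper's version is slightly more economical in isolating the $k$ relevant coordinates and exhibiting a single explicit leading term. Both arguments are complete, and your closing remark correctly identifies $q-2\geq 2$ as the source of the hypothesis $q\geq 4$.
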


\begin{proof}Let $C$ be an $\fq$-linear code with $q\geq 4$. \\
Without loss of generality we may assume that $C$ has a generator matrix of the form $G=(I_k|B)$.
Let $\bx =(x_1, \ldots , x_k)$ be an $k$-tuple of nonzero elements of $\fq $.
Let $D(\bx)$ be the diagonal matrix with $\bx$ on its diagonal.
Let $(D(\bx)|B)$ be the generator matrix of the code $C_{\bx}$. Then $C_{\bx}$ is monomial equivalent with $C$.
Let $X=(X_1, \ldots , X_k)$. Now define
$$
f(X)=\det ((D(X)|B)(D(X)|B)^T).
$$
Then
$$
f(X) = \det (D(X_1^2, \ldots , X_k^2)+BB^T)
$$
Hence $f(X)$ is a polynomial in the variables $X_1, \ldots , X_k$ and the degree
of $f(X)$ with respect to $X _i$ is $2$ for all $i$, which is at most $q-2$, since $q\geq 4$.
The leading term of $f(X)$ with respect to the total degree lex order is $X_1^2 \cdots X_k^2$.
So $f(X)$ is a nonzero polynomial. Therefore $f(\bx)\not=0$ for some
$\bx \in (\fq \setminus \{ 0 \}) ^k$ by Proposition \ref{prop-fX=0}.  \\
Hence $C_{\bx}$ is an LCD code for this choice of $\bx$ by Corollary \ref{cor-GGT}.
\end{proof}

\begin{thm}\label{thm-HLCD}Let $C$ be an $\fq$-linear code with $q$ a square and $q> 4$.
Then $C$ is monomial equivalent with a Hermitian LCD code.
\end{thm}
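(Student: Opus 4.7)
The plan is to mimic the Euclidean argument of Theorem~\ref{thm-LCD} essentially verbatim, substituting the Hermitian criterion of Corollary~\ref{cor-HGGT} for its Euclidean counterpart, and adjusting the degree accounting for the Frobenius twist.

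First I would put the code in standard form, assuming without loss of generality that $C$ has a generator matrix $G=(I_k|B)$. For a $k$-tuple $\bx=(x_1,\ldots,x_k)$ of nonzero elements of $\fq$, the matrix $(D(\bx)|B)$ generates a code $C_\bx$ that is monomial equivalent to $C$. By Corollary~\ref{cor-HGGT}, $C_\bx$ is Hermitian LCD precisely when
\[
(D(\bx)|B)\overline{(D(\bx)|B)}^T \;=\; D(x_1\bar x_1,\ldots,x_k\bar x_k)+B\bar B^T
\]
is nonsingular. Since $\bar x_i=x_i^{\sqrt q}$, we have $x_i\bar x_i=x_i^{\sqrt q+1}$, so I can define the polynomial
\[
f(X_1,\ldots,X_k) \;=\; \det\bigl(D(X_1^{\sqrt q+1},\ldots,X_k^{\sqrt q+1})+B\bar B^T\bigr) \;\in\; \fq[X_1,\ldots,X_k],
\]
which evaluates at $\bx$ to the determinant above.

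Next I would check the two properties needed to apply Proposition~\ref{prop-fX=0}. Expanding the determinant of a diagonal-plus-constant matrix, the only term that produces $\prod_i X_i^{\sqrt q+1}$ comes from the product of all diagonal entries, so the leading monomial with respect to the total degree lex order is $X_1^{\sqrt q+1}\cdots X_k^{\sqrt q+1}$ with coefficient $1$; in particular $f(X)$ is nonzero. Moreover the degree of $f(X)$ in each variable $X_i$ equals $\sqrt q+1$. The hypothesis $q>4$ together with $q$ being a square forces $q\ge 9$, and then $\sqrt q+1\le q-2$ (equivalently $\sqrt q\le q-3$, which holds for all $q\ge 9$). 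Hence $f$ satisfies the degree bound of Proposition~\ref{prop-fX=0}, yielding some $\bx\in(\fq\setminus\{0\})^k$ with $f(\bx)\ne 0$, so $C_\bx$ is Hermitian LCD by Corollary~\ref{cor-HGGT}.

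The only genuinely delicate point is the degree count: the conjugation inflates the per-variable degree from $2$ (as in Theorem~\ref{thm-LCD}) to $\sqrt q+1$, and this is exactly what forces the hypothesis $q\ge 9$ rather than merely $q\ge 4$. I expect no other obstacle; the rest of the proof is a direct translation of the Euclidean argument.
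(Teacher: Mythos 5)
Your proposal is correct and follows essentially the same route as the paper: reduce to a generator matrix $(I_k|B)$, form the polynomial $\det\bigl(D(X_1^{\sqrt q+1},\ldots,X_k^{\sqrt q+1})+B\bar B^T\bigr)$, observe that its leading monomial is $X_1^{\sqrt q+1}\cdots X_k^{\sqrt q+1}$ and that the per-variable degree $\sqrt q+1$ is at most $q-2$ when $q>4$ is a square, then apply Proposition~\ref{prop-fX=0} and Corollary~\ref{cor-HGGT}. Your explicit verification that $q>4$ and $q$ square force $q\ge 9$ and hence $\sqrt q+1\le q-2$ is a slightly more careful rendering of the same degree count the paper uses.
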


\begin{proof}The proof is similar to the proof of Theorem \ref{thm-LCD}.
Let $C$ be an $\fq$-linear code with $q$ a square and $q> 4$. \\
Without loss of generality we may assume that $C$ has a generator matrix of the form $G=(I_k|B)$.
Let $\bx =(x_1, \ldots , x_k)$ be an $k$-tuple of nonzero elements of $\fq $.
Let $D(\bx)$ be the diagonal matrix with $\bx$ on its diagonal.
Let $(D(\bx)|B)$ be the generator matrix of the code $C_{\bx}$. Then $C_{\bx}$ is monomial equivalent with $C$.
Let $X=(X_1, \ldots , X_k)$. Now define
$$
g(X)=\det ((D(X)|B)(D(X^{\sqrt{q}})|\bar{B})^T).
$$
Then
$$
g(X) = \det (D(X_1^{\sqrt{q}+1}, \ldots , X_k^{\sqrt{q}+1})+B\bar{B}^T)
$$
Hence $g(X)$ is a polynomial in the variables $X_1, \ldots , X_k$ and the degree
of $g(X)$ with respect to $X _i$ is $\sqrt{q}+1$ for all $i$, which is at most $q-2$, since $q> 4$.
The leading term of $f(X)$ with respect to the total degree lex order is $X_1^{\sqrt{q}+1} \cdots X_k^{\sqrt{q}+1}$.
So $f(X)$ is a nonzero polynomial. Therefore $f(\bx)\not=0$ for some
$\bx \in (\fq \setminus \{ 0 \}) ^k$ by Proposition \ref{prop-fX=0}.  \\
Hence $C_{\bx}$ is a Hermitian LCD code for this choice of $\bx$ by Corollary \ref{cor-HGGT}.
\end{proof}

\begin{rem}Both Theorems \ref{thm-LCD} and \ref{thm-HLCD} are generalizations of \cite[Proposition 3.1]{carlet:2017} and \cite[Proposition 4.1]{carlet:2017}, respectively.
As a result one can now state that the existence of an MDS code with given parameters $[n,k,n-k+1]_q$ is equivalent to the existence of a LCD MDS code with the same parameters if $q\geq 4$,
and to the existence of  Hermitian LCD MDS codes if $q > 4$.
And similar statements hold for optimal codes. See
\cite{boonniyoma:2016,carlet:2017,chen:2017,dougherty:2015,galvez:2017,jin:2017,lina:2016,liu:2016,sari:2016,sok:2017}.
\end{rem}

\begin{rem}There exist (Hermitian) LCD codes that are above the Gilbert-Varshamov (GV) bound if $q$ is a square and $q\geq 49$, by Theorem \ref{thm-LCD}
and since there are algebraic geometric codes above the GV bound. See \cite{bassa:2015,tsfasman:1982}.
\end{rem}

\section{Conclusion}
The dimension of the hull of a code is determined by the extended weight enumerator of the code in case $q=2$ or $q=3$.
We show that every ${\mathbb F}_q $-linear code is monomial equivalent with an LCD code in case $q \geq 4$,
and similarly with an Hermitian LCD in case $q$ is an even power of a prime and $q > 4$.

\bibliographystyle{plain}

\end{document}